\documentclass[letterpaper, 10 pt,journal]{IEEEtran}

\IEEEoverridecommandlockouts                               


\pdfminorversion=4

\usepackage{graphics} 
\usepackage{graphicx}
\usepackage{epsfig} 
\usepackage{amsmath} 
\usepackage{amssymb}  

\usepackage{tikz} 
\usepackage{pgfplots} 

\usepackage{algpseudocode}
\usepackage{algorithm}
\usepackage{epstopdf}

\usepackage{color}

\usepackage{lipsum}
\usepackage{verbatim}

\usepackage{amsthm}

\usepackage{mathtools}

\newtheorem{theorem}{\bf Theorem} \newtheorem{definition}[theorem]{\bf Definition} 
 \newtheorem{remark}[theorem]{\bf Remark}
  \newtheorem{proposition}[theorem]{\bf Proposition} 
 \newtheorem{example}[theorem]{\bf Example} 
\newtheorem{Algorithm}[theorem]{\bf Algorithm}


\title{\LARGE \bf
A trajectory-based framework for data-driven\\system analysis and control}

\author{Julian Berberich, Frank Allg\"ower
\thanks{The authors are with the Institute for Systems Theory and Automatic Control, University of Stuttgart, 70550 Stuttgart, Germany. 
E-mail:$\{$ julian.berberich, frank.allgower\}@ist.uni-stuttgart.de,
Phone:
+49 711 685-\{67747, 67733\}, Contact for correspondence:
J. Berberich.
The authors thank the German Research Foundation (DFG) for support of this work within the German Excellence Strategy under grant EXC 2075, along with the Max Planck Research School (IMPRS) for Intelligent Systems for their support.
}
}

\begin{document}
\IEEEpubid{\begin{minipage}{\textwidth}\ \\[12pt] This version has been accepted for publication in Proc. European Control Conference (ECC), 2020. Personal use of this material is permitted. Permission
from EUCA must be obtained for all other uses, in any current or future media, including reprinting/republishing this material for advertising or promotional
purposes, creating new collective works, for resale or redistribution to servers or lists, or reuse of any copyrighted component of this work in other works.\end{minipage}}
\maketitle


\begin{abstract}
The vector space of all input-output trajectories of a discrete-time linear time-invariant (LTI) system is spanned by time-shifts of a single measured trajectory, given that the respective input signal is persistently exciting.
This fact, which was proven in the behavioral control framework, shows that a single measured trajectory can capture the full behavior of an LTI system and might therefore be used directly for system analysis and controller design, without explicitly identifying a model.
In this paper, we translate the result from the behavioral context to the classical state-space control framework and we extend it to certain classes of nonlinear systems, which are linear in suitable input-output coordinates.
Moreover, we show how this extension can be applied to the data-driven simulation problem, where we introduce kernel-methods to obtain a rich set of basis functions.
\end{abstract}

\section{Introduction}
Finding rigorous and efficient ways to integrate data into control theory has been a problem of great interest for many decades.
Since most of the classical contributions in control theory rely on model knowledge, the problem of finding such a model from measured data, i.e., system identification, has become a mature research field~\cite{Ljung87}.
More recently, learning controllers directly from data has received increasing interest, not least due to many successful practical applications of reinforcement learning techniques~\cite{Sutton98}.
However, as is thoroughly evaluated in~\cite{Recht18}, such methods typically require large amounts of data, they are often not reproducible, and their analysis rarely addresses rigorous guarantees on, e.g., stability of the closed loop.
Also in the control community, several approaches for the direct design of controllers from data have been proposed.
Established methods include the Virtual Reference Feedback Tuning paradigm~\cite{Campi02} or Iterative Feedback Tuning~\cite{Hjalmarsson98}.
However, fundamental problems such as the direct data-driven design of linear quadratic optimal controllers with guarantees from finite noisy data have only been considered recently~\cite{Dean19,Umenberger18}.

\IEEEpubidadjcol
In this paper, we consider an alternative, unitary framework for data-driven control theory, which allows for the development of various system analysis and controller design methods based directly on measured data.
This framework relies on the characterization of all trajectories of an unknown system using a single measured data trajectory.
The latter problem has been solved in the context of behavioral systems theory for discrete-time linear time-invariant (LTI) systems in~\cite{Willems05}.
In the behavioral approach, a system is not defined via a differential or difference equation with inputs and outputs, but rather as the space of all system trajectories~\cite{Polderman98,Willems91}.
Thus, it is naturally well-suited for the development of purely data-driven approaches to system analysis and control.

Recently, there have been various contributions, which use the result of~\cite{Willems05} for direct data-driven system analysis and control.
In~\cite{Yang15,Coulson19}, a data-driven MPC scheme relying on~\cite{Willems05} is suggested to control unknown systems.
A stochastic analysis of this scheme and an application to power systems are detailed in~\cite{Coulson19b} and~\cite{Huang19}, respectively.
Moreover,~\cite{Berberich19b} provides a first theoretical analysis of stability and robustness of a data-driven MPC scheme based on terminal equality constraints.
In~\cite{Persis19}, a data-driven closed-loop parametrization under state-feedback is derived and employed to design stabilizing and LQR controllers.
This approach is extended to robust design from noisy data in~\cite{Berberich19c}.
Further,~\cite{Waarde19} provides a general framework for analyzing data-driven problems with not persistently exciting data.
Finally, data-based conditions for dissipativity are suggested in~\cite{Romer19}.
Altogether, this indicates a great potential of the work of~\cite{Willems05} for direct data-driven analysis and control.
In this paper, we consider the work of~\cite{Willems05} in the classical control framework and extend it to certain classes of nonlinear systems.
Moreover, we illustrate the usefulness of this extension via a novel kernel-based approach to nonlinear data-driven simulation.

The remainder of this paper is structured as follows.
In Section~\ref{sec:linear_systems}, we phrase the main theorem of~\cite{Willems05}, which uses measured data to characterize all system trajectories, in the classical control setting, and we show how this result can be improved by weaving multiple such trajectories together.
In Section~\ref{sec:nonlinear_systems}, we provide a novel extension of~\cite{Willems05} to classes of nonlinear systems, which are linear in suitably chosen and known nonlinear coordinates.
Building on these results, we solve the data-driven simulation problem for such nonlinear systems in Section~\ref{sec:DD_sim}.
The paper is concluded in Section~\ref{sec:conclusion}.

\section{Setting}

We denote the set of integers in the interval $[a,b]$ by $\mathbb{I}_{[a,b]}$.
The Kronecker product is written as $\otimes$.
For a sequence $\{x_k\}_{k=0}^{N-1}$, we define the Hankel matrix\newpage
\begin{align*}
H_L&(x)\coloneqq\begin{bmatrix}x_0 & x_1& \dots & x_{N-L}\\
x_1 & x_2 & \dots & x_{N-L+1}\\
\vdots & \vdots & \ddots & \vdots\\
x_{L-1} & x_{L} & \dots & x_{N-1}
\end{bmatrix}.
\end{align*}
For a stacked window of the sequence, we write
\begin{align*}
x_{[a,b]}=\begin{bmatrix}x_{a}\\\vdots\\x_{b}\end{bmatrix}.
\end{align*}
Further, $x$ will denote either the sequence itself or the stacked vector $x_{[0,N-1]}$ containing all of its components.
A key assumption for our results will be persistence of excitation of the input signal, as captured in the following standard definition.

\begin{definition}
We say that a signal $\{x_k\}_{k=0}^{N-1}$ with $x_k\in\mathbb{R}^n$ is persistently exciting of order $L$ if $\text{rank}(H_L(x))=nL$.
\end{definition}

Note that the above definition implies $N\geq(n+1)L-1$.
This means that, for a signal to be persistently exciting, it is not sufficient that its time-shifts are linearly independent, but the signal must also be long enough.
A large part of this paper deals with discrete-time multi-input multi-output LTI systems of the form
\begin{align}\label{eq:LTI_sys}
\begin{split}
x_{k+1}&=Ax_k+Bu_k,\>\> x_0=\bar{x},\\
y_k&=Cx_k+Du_k,
\end{split}
\end{align}
where the matrices $A,B,C,D$ as well as the initial condition $\bar{x}$ are unknown and only input-output data $\{u_k,y_k\}_{k=0}^{N-1}$, which may be obtained from~\eqref{eq:LTI_sys} via simulation or an experiment, is available.
Throughout this paper, $n$ denotes the order of the unknown system which is only assumed to be known in terms of a potentially rough upper bound.
Further, we denote the input and output dimension by $m$ and $p$, respectively.

We will use a single trajectory to characterize all other trajectories, which might be produced from the system~\eqref{eq:LTI_sys}, i.e., which satisfy the following definition.

\begin{definition}\label{def:trajectory_of}
We say that an input-output sequence $\{u_k,y_k\}_{k=0}^{N-1}$ is a trajectory of an LTI system $G$, if there exists an initial condition $\bar{x}\in\mathbb{R}^n$ as well as a state sequence $\{x_k\}_{k=0}^{N}$ such that
\begin{align*}
x_{k+1}&=Ax_k+Bu_k,\>\>x_0=\bar{x}\\
y_k&=Cx_k+Du_k,
\end{align*}
for $k=0,\dots,N-1$, where $(A,B,C,D)$ is a minimal realization of $G$.
\end{definition}

It follows from linearity that the set of all trajectories of an LTI system in the sense of Definition~\ref{def:trajectory_of} is a vector space.
As we will see in Section~\ref{sec:linear_systems}, a basis for this vector space is formed by time-shifts of a single measured trajectory, given that the respective input signal is persistently exciting.

Throughout this paper, we make extensive use of the well-known fact that any LTI system admits a controllable and observable minimal realization.
The particular choice of a specific minimal realization is however not relevant.
Further, using LTI system properties, it is easy to show that any fixed window of an input-output trajectory $\{u_k,y_k\}_{k=a}^{b}$ induces a unique state trajectory $\{x_k\}_{k=a}^{b}$ (in a given minimal realization), whenever $b-a\geq n-1$.

\section{Trajectory-based representation of linear systems} \label{sec:linear_systems}

In this section, we translate the main result of~\cite{Willems05}, which characterizes the trajectory space of an unknown system from measured data, to the classical state-space control framework.
While the behavioral theory is naturally well-suited for such a result, we illustrate that it can also be formulated in the classical framework in an elegant way.
Further, we show how a required persistence of excitation assumption can be relaxed by weaving multiple trajectories together to achieve an overall larger time horizon.

The following result is the correspondence of~\cite[Theorem 1]{Willems05} in the classical control setting and it will serve as the basis for the remainder of this paper.

\begin{theorem}\label{thm:hankel_disc_time}
Suppose $\{u_k,y_k\}_{k=0}^{N-1}$ is a trajectory of an LTI system $G$, where $u$ is persistently exciting of order $L+n$.
Then, $\{\bar{u}_k,\bar{y}_k\}_{k=0}^{L-1}$ is a trajectory of $G$ if and only if there exists $\alpha\in\mathbb{R}^{N-L+1}$ such that
\begin{align}\label{eq:thm_hankel}
\begin{bmatrix}H_L(u)\\H_L(y)\end{bmatrix}\alpha
=\begin{bmatrix}\bar{u}\\\bar{y}\end{bmatrix}.
\end{align}
\end{theorem}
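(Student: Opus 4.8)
The plan is to prove both implications separately, with the ``only if'' direction being essentially a consequence of linearity and the ``if'' direction being the substantive part where persistence of excitation enters. Throughout, fix a minimal (hence controllable and observable) realization $(A,B,C,D)$ of $G$ and let $\{x_k\}_{k=0}^N$ be the state sequence associated with the measured trajectory $\{u_k,y_k\}_{k=0}^{N-1}$.

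For the ``only if'' direction, suppose \eqref{eq:thm_hankel} holds for some $\alpha\in\mathbb{R}^{N-L+1}$. Each column of the stacked Hankel matrix $\begin{bmatrix}H_L(u)\\H_L(y)\end{bmatrix}$ is the stacked window $\begin{bmatrix}u_{[i,i+L-1]}\\y_{[i,i+L-1]}\end{bmatrix}$, which is itself a length-$L$ trajectory of $G$ (starting from state $x_i$). Since the set of length-$L$ trajectories of $G$ is a vector space, any linear combination $\begin{bmatrix}\bar u\\\bar y\end{bmatrix}$ of these columns is again a length-$L$ trajectory; concretely it is the trajectory generated from the initial state $\sum_i \alpha_i x_i$. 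This shows $\{\bar u_k,\bar y_k\}_{k=0}^{L-1}$ is a trajectory of $G$.

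For the ``if'' direction, the key step is to show that the columns of $\begin{bmatrix}H_L(u)\\H_L(y)\end{bmatrix}$ span the \emph{entire} space of length-$L$ trajectories of $G$, which has dimension $mL+n$ (the $mL$ free input samples plus the $n$-dimensional initial state). Since every column is a trajectory, the column space is contained in this trajectory space, so it suffices to show the rank is at least $mL+n$. I would argue this by relating $\begin{bmatrix}H_L(u)\\H_L(y)\end{bmatrix}$ to $\begin{bmatrix}H_L(u)\\H_1(x)\end{bmatrix}$ (the input Hankel matrix stacked on the matrix whose columns are the states $x_i$, $i=0,\dots,N-L$): using the output equation and the state recursion, there is an invertible transformation — built from $C$, $D$ and the powers $A^jB$ via the Toeplitz/observability structure — carrying one onto the other, so the two have the same rank. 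It then remains to show $\operatorname{rank}\begin{bmatrix}H_L(u)\\H_1(x)\end{bmatrix}=mL+n$. The persistence of excitation of $u$ of order $L+n$ gives $\operatorname{rank}(H_{L+n}(u))=m(L+n)$, and I would combine this with controllability of $(A,B)$ to conclude: roughly, if some row vector $\begin{bmatrix}\eta^\top & \xi^\top\end{bmatrix}$ annihilates $\begin{bmatrix}H_L(u)\\H_1(x)\end{bmatrix}$, then expressing $x_i$ in terms of an earlier state $x_{i-n}$ and the intervening $n$ inputs via the controllability matrix shows that $\eta$ and $\xi$ together would force a nontrivial linear dependence among the rows of $H_{L+n}(u)$ unless $\eta=0$, $\xi=0$; the controllability matrix having full row rank $n$ is what lets $\xi$ be absorbed. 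Once the spanning property is established, the ``if'' direction is immediate: given any length-$L$ trajectory $\{\bar u_k,\bar y_k\}$, it lies in the column span, so an $\alpha$ satisfying \eqref{eq:thm_hankel} exists.

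The main obstacle is the rank computation $\operatorname{rank}\begin{bmatrix}H_L(u)\\H_1(x)\end{bmatrix}=mL+n$ — this is where both hypotheses (persistence of excitation \emph{and} controllability, the latter coming from minimality of the realization) must be used together, and it is the technical heart of \cite{Willems05}. The bookkeeping of index ranges (the Hankel matrix $H_L(u)$ built from $\{u_k\}_{k=0}^{N-1}$ uses windows of length $L$, but persistence of excitation is assumed at order $L+n$, so one works with the taller matrix $H_{L+n}(u)$ and the extra $n$ rows are exactly what is needed to ``cover'' the state) also requires care. The transformation relating the $(u,y)$-Hankel matrix to the $(u,x)$-matrix, and the passage from a state-space trajectory to the existence of $\alpha$, are comparatively routine linear algebra.
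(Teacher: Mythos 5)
Your proposal takes a genuinely different route from the paper: the paper proves Theorem~\ref{thm:hankel_disc_time} in one line by citing \cite[Theorem 1]{Willems05} for controllable systems, whereas you sketch a self-contained state-space proof. Your architecture is the standard one for such a direct proof (and matches the intuition the paper gives after the theorem): one implication is pure linearity, and the other reduces to showing that the columns of the stacked Hankel matrix span the whole $(mL+n)$-dimensional trajectory space, which in turn reduces to the rank identity $\operatorname{rank}\begin{bmatrix}H_L(u)^\top & H_1(x)^\top\end{bmatrix}^\top=mL+n$. One cosmetic error: you have swapped the labels. The direction in which you assume \eqref{eq:thm_hankel} and conclude that $(\bar u,\bar y)$ is a trajectory is the ``if'' direction of the theorem (the easy, linearity-only one), and the direction requiring persistence of excitation is the ``only if'' direction; the paper states this explicitly. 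Your mathematics attaches the right argument to each implication, only the names are reversed.

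The substantive gap is that the decisive step --- the rank identity --- is asserted rather than proved, and your sketch of it would not survive being written out as stated. First, the map from $\begin{bmatrix}H_L(u)^\top & H_1(x)^\top\end{bmatrix}^\top$ to $\begin{bmatrix}H_L(u)^\top & H_L(y)^\top\end{bmatrix}^\top$ is given by $\begin{bmatrix}I & 0\\ \mathcal{T}_L & \mathcal{O}_L\end{bmatrix}$ with $\mathcal{O}_L$ the extended observability matrix and $\mathcal{T}_L$ the Toeplitz matrix of Markov parameters; this is not invertible (the two matrices do not even have the same number of rows), and it is rank-preserving only when $\mathcal{O}_L$ has full column rank, which requires $L$ to be at least the lag of $G$. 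The cleaner statement is that this map is exactly the parametrization of the trajectory space by (input window, initial state), so surjectivity of $\begin{bmatrix}H_L(u)^\top & H_1(x)^\top\end{bmatrix}^\top$ onto $\mathbb{R}^{mL+n}$ gives the spanning property without any rank comparison. Second, your argument for the rank identity --- writing $x_i=A^nx_{i-n}+\text{(controllability terms)}$ and deriving a dependence among rows of $H_{L+n}(u)$ --- breaks down for the columns with $i<n$ and does not explain where the extra $n$ orders of excitation are actually consumed; the genuine argument in \cite{Willems05} is more delicate (it works with the left kernel of the combined matrix and uses controllability to rule out a nonzero state component). Since this lemma is precisely the content of the result the paper cites, your proof as written is an outline that defers its only hard step to the very reference it is meant to replace.
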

\begin{proof}
This is a direct application of~\cite[Theorem 1]{Willems05} to the special case of controllable state-space systems.
\end{proof}

Note that~\eqref{eq:thm_hankel} is equivalent to
\begin{align}\label{eq:time_shift1}
\bar{u}_{[0,L-1]}&=\sum_{i=0}^{N-L}\alpha_iu_{[i,L-1+i]},\\
\bar{y}_{[0,L-1]}&=\sum_{i=0}^{N-L}\alpha_iy_{[i,L-1+i]},\label{eq:time_shift2}
\end{align}
i.e., the trajectory space is spanned by time-shifts of the measured trajectory.
Similarly, it holds for the state that
\begin{align}
\bar{x}_{[0,L-1]}=\sum_{i=0}^{N-L}\alpha_ix_{[i,L-1+i]},
\end{align}
where $\bar{x}$ and $x$ are states corresponding to $(\bar{u},\bar{y})$ and $(u,y)$, respectively, in the same minimal realization.
The ``if''-direction in Theorem~\ref{thm:hankel_disc_time} follows directly from the fact that $G$ is LTI, without adhering to the persistence of excitation assumption.
The intuition about the ``only if''-direction is sketched in the following.
Take any trajectory $\{\bar{u}_k,\bar{y}_k\}_{k=0}^{L-1}$ of $G$.
Clearly, $L$ degrees of freedom in the input are required to choose $\alpha\in\mathbb{R}^{N-L+1}$ such that~\eqref{eq:time_shift1} holds.
Additional $n$ degrees of freedom can then be used to attain the internal initial condition $\bar{x}_0$.
Since $\{\bar{y}_k\}_{k=0}^{L-1}$ is a linear combination of $\{\bar{u}_k\}_{k=0}^{L-1}$ and $\bar{x}_0$, this is enough to find an $\alpha$ which satisfies both~\eqref{eq:time_shift1} and~\eqref{eq:time_shift2}, and thus~\eqref{eq:thm_hankel}.
Therefore, persistence of excitation of order $L+n$ is required for the equivalence in Theorem~\ref{thm:hankel_disc_time}.

Theorem~\ref{thm:hankel_disc_time} shows that all trajectories of an unknown LTI system can be constructed from a \emph{single} persistently exciting trajectory.
Equivalently, the vector space of all system trajectories is equal to the range of a data-dependent Hankel matrix.
Thus, in a way, the measured input-output trajectory serves as a system representation on its own, without using it explicitly to identify a model.
Prior knowledge of the unknown system's order is only needed implicitly in Theorem~\ref{thm:hankel_disc_time} through the condition that $u$ has to be persistently exciting of order $L+n$.
Hence, if the amount of available data $N$ is significantly larger than $n$ and the input is persistently exciting of a sufficiently high order, a rough upper bound on $n$ suffices to apply Theorem~\ref{thm:hankel_disc_time}.

As described above, persistence of excitation is necessary for the equivalence in Theorem~\ref{thm:hankel_disc_time}.
Note however that it also sets a fundamental limit on the application of Theorem~\ref{thm:hankel_disc_time}:
In order to span the space of all trajectories of length $L$, Theorem~\ref{thm:hankel_disc_time} requires $N\geq(m+1)(L+n)-1$ or, equivalently, $L\leq\frac{N+1}{m+1}-n$.
Loosely speaking, if $m=1$, $L$ can only be half as long as $N$ and, with increasing input dimension $m$, the maximum length $L$ decreases by a factor of $\frac{1}{m+1}$.

An intuitive solution to overcome this limitation would be to weave several, say $\xi\in\mathbb{N}$, trajectories of length $L$ together to construct an overall trajectory of length $\xi L$.
This is however not trivial, since the internal states of the separate trajectories have to align at the intersections.
In~\cite[Lemma 3]{Markovsky05}, it is shown that two distinct input-output trajectories can be weaved together if they align over a sufficiently long window at their intersection.
The following result is an extension of~\cite[Lemma 3]{Markovsky05} to more than two trajectories.

\begin{proposition}\label{prop:hankel_weave}
Suppose $\{u_k,y_k\}_{k=0}^{N-1}$ is a trajectory of an LTI system $G$, where $u$ is persistently exciting of order $L+n$.
Then, $\{\bar{u}_k,\bar{y}_k\}_{k=0}^{\tilde{L}-1}$ with $\tilde{L}=\xi L+(1-\xi)n$, $\xi\in\mathbb{N}$, is a trajectory of $G$ if and only if there exist $\alpha^i\in\mathbb{R}^{N-L+1},i\in\mathbb{I}_{[1,\xi]},$ such that
\begin{align}
\label{eq:prop_hankel_weave1}
\begin{bmatrix}
H_L(u) & 0 \\0 &I_{\xi-1}\otimes H_{L-n}\left(u_{[n,N-1]}\right)\\
H_L(y) & 0 \\0 &I_{\xi-1}\otimes H_{L-n}\left(y_{[n,N-1]}\right)
\end{bmatrix}
\begin{bmatrix}
\alpha^1\\\vdots\\\alpha^\xi
\end{bmatrix}
=&\begin{bmatrix}
\bar{u}_{[0,\tilde{L}-1]}\\\bar{y}_{[0,\tilde{L}-1]}
\end{bmatrix},\\\nonumber\\
\label{eq:prop_hankel_weave2}
H_n\left(u_{[L-n,N-1]}\right)\alpha^i=H_n\left(u_{[0,N-L+n-1]}\right)&\alpha^{i+1},\\
\label{eq:prop_hankel_weave3}
H_n\left(y_{[L-n,N-1]}\right)\alpha^i=H_n\left(y_{[0,N-L+n-1]}\right)&\alpha^{i+1},\\
\nonumber i\in\mathbb{I}_{[1,\xi-1]}.
\end{align}
\end{proposition}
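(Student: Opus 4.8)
The plan is to prove both directions by decomposing the long trajectory $\{\bar u_k,\bar y_k\}_{k=0}^{\tilde L-1}$ into $\xi$ overlapping windows of length $L$, each overlapping its neighbor in exactly $n$ time steps, and then applying Theorem~\ref{thm:hankel_disc_time} to each window while enforcing consistency of the internal state at the overlaps. Concretely, for $i\in\mathbb{I}_{[1,\xi]}$ I would set $t_i=(i-1)(L-n)$ and define the $i$-th window as $\{\bar u_k,\bar y_k\}_{k=t_i}^{t_i+L-1}$; since $\tilde L=\xi L+(1-\xi)n=(\xi-1)(L-n)+L$, the last window ends exactly at time $\tilde L-1$, and consecutive windows share the block of $n$ samples at times $t_{i+1},\dots,t_{i+1}+n-1=t_i+L-n,\dots,t_i+L-1$. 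The $\alpha^i$ in the statement are precisely the coefficient vectors furnished by Theorem~\ref{thm:hankel_disc_time} for each window, and the Hankel matrices $H_{L-n}(u_{[n,N-1]})$, etc., appearing in the $I_{\xi-1}\otimes(\cdot)$ block are what you get by dropping the first $n$ rows of $H_L(u)$ (equivalently, selecting the ``tail'' part of each window) — I would make this bookkeeping explicit up front so that~\eqref{eq:prop_hankel_weave1} is visibly just the stacked per-window trajectory equations with the shared overlap rows written only once.

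For the ``if'' direction: given $\alpha^1,\dots,\alpha^\xi$ satisfying~\eqref{eq:prop_hankel_weave1}--\eqref{eq:prop_hankel_weave3}, the first block of~\eqref{eq:prop_hankel_weave1} together with the $i$-th block of the Kronecker part shows that $H_L\!\begin{bmatrix}u\\y\end{bmatrix}\alpha^i$ reproduces the $i$-th window of $(\bar u,\bar y)$ on its tail, and the overlap rows shared with window $i-1$ force agreement on the first $n$ samples; hence by the ``only if''... rather, by the easy ``if'' direction of Theorem~\ref{thm:hankel_disc_time}, each window $\{\bar u_k,\bar y_k\}_{k=t_i}^{t_i+L-1}$ is a trajectory of $G$, and moreover (using the state equation in Theorem~\ref{thm:hankel_disc_time}) the corresponding state satisfies $\bar x_{[t_i,t_i+L-1]}=H_L(x)\alpha^i$ in a fixed minimal realization. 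Conditions~\eqref{eq:prop_hankel_weave2}--\eqref{eq:prop_hankel_weave3} say the length-$n$ I/O windows at the end of trajectory-piece $i$ equal those at the start of piece $i+1$; since any minimal realization is observable, a length-$n$ I/O window together with the input uniquely determines the state (this is exactly the fact quoted at the end of Section~II, that a window with $b-a\ge n-1$ induces a unique state), so the terminal state of piece $i$ equals the initial state of piece $i+1$. Concatenating the $\xi$ state-and-I/O pieces along the shared states then yields a single valid state trajectory over $\{0,\dots,\tilde L\}$, proving $\{\bar u_k,\bar y_k\}_{k=0}^{\tilde L-1}$ is a trajectory of $G$.

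For the ``only if'' direction: suppose $\{\bar u_k,\bar y_k\}_{k=0}^{\tilde L-1}$ is a trajectory of $G$ with state sequence $\{\bar x_k\}$. Then each window $\{\bar u_k,\bar y_k\}_{k=t_i}^{t_i+L-1}$ is itself a trajectory of $G$ (of length $L$), so Theorem~\ref{thm:hankel_disc_time} (applicable because $u$ is persistently exciting of order $L+n$) gives $\alpha^i\in\mathbb{R}^{N-L+1}$ with $\begin{bmatrix}H_L(u)\\H_L(y)\end{bmatrix}\alpha^i=\begin{bmatrix}\bar u_{[t_i,t_i+L-1]}\\\bar y_{[t_i,t_i+L-1]}\end{bmatrix}$; stacking these and noticing that the overlap samples are shared gives~\eqref{eq:prop_hankel_weave1}. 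For~\eqref{eq:prop_hankel_weave2}--\eqref{eq:prop_hankel_weave3} I evaluate both sides: $H_n(u_{[L-n,N-1]})\alpha^i$ is the length-$n$ tail of window $i$, which equals $\bar u_{[t_i+L-n,\,t_i+L-1]}=\bar u_{[t_{i+1},\,t_{i+1}+n-1]}$, and $H_n(u_{[0,N-L+n-1]})\alpha^{i+1}$ is the length-$n$ head of window $i+1$, which equals the same $\bar u_{[t_{i+1},\,t_{i+1}+n-1]}$; likewise for $y$. Hence~\eqref{eq:prop_hankel_weave2}--\eqref{eq:prop_hankel_weave3} hold, completing this direction.

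The main obstacle I anticipate is not any single step but the indexing: getting the window offsets $t_i=(i-1)(L-n)$, the identity $\tilde L=(\xi-1)(L-n)+L$, and the precise row-selection that turns $H_L$ restricted to its last $L-n$ block-rows into $H_{L-n}(u_{[n,N-1]})$ all to line up so that~\eqref{eq:prop_hankel_weave1} really is the literal stack of the $\xi$ per-window equations with overlaps deduplicated. A secondary point requiring care is the uniqueness-of-state argument gluing the pieces: I must invoke observability of the minimal realization to conclude that matching $n$ consecutive I/O samples (plus the driving input) forces matching states, and then check that the glued state trajectory genuinely satisfies $x_{k+1}=Ax_k+Bu_k$ across each seam — which it does precisely because the shared state at the seam is simultaneously the terminal state of one piece and the initial state of the next. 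Everything else ("if" direction of Theorem~\ref{thm:hankel_disc_time}, linearity of the trajectory space) is used as a black box.
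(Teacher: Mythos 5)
Your proposal is correct and follows essentially the same route as the paper's own proof: decompose the long trajectory into $\xi$ overlapping length-$L$ windows sharing $n$ samples, apply Theorem~\ref{thm:hankel_disc_time} to each window in both directions, and use the fact that an $n$-step input-output overlap pins down the internal state (observability of the minimal realization) to glue the pieces at the seams. The only difference is presentational — you index windows by their start times $t_i=(i-1)(L-n)$ and spell out the observability argument, whereas the paper defines the windows directly as $H_L(u)\alpha^i$, $H_L(y)\alpha^i$ — so no substantive comparison is needed.
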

\begin{proof}
\textbf{If.}
Define $\{\bar{u}^i_k,\bar{y}^i_k\}_{k=0}^{L-1}$ via
\begin{align*}
\begin{bmatrix}\bar{u}_{[0,L-1]}^i\\\bar{y}_{[0,L-1]}^i\end{bmatrix}=
\begin{bmatrix}H_L(u)\\H_L(y)\end{bmatrix}\alpha^i,\>\>i\in\mathbb{I}_{[1,\xi]},
\end{align*}
and note that~\eqref{eq:prop_hankel_weave1} means that $\{\bar{u}_k,\bar{y}_k\}_{k=0}^{\tilde{L}-1}$ is a stacked version of the sequences $\{\bar{u}^i_k,\bar{y}^i_k\}_{k=0}^{L-1}$ in the sense that
\begin{align}\label{eq:thm_stacked_proof1}
\bar{u}_{[0,\tilde{L}-1]}=
\begin{bmatrix}
\bar{u}_{[0,L-1]}^1\\ \bar{u}_{[n,L-1]}^2\\\vdots\\\bar{u}_{[n,L-1]}^\xi
\end{bmatrix},\>\>
\bar{y}_{[0,\tilde{L}-1]}=
\begin{bmatrix}
\bar{y}_{[0,L-1]}^1\\ \bar{y}_{[n,L-1]}^2\\\vdots\\\bar{y}_{[n,L-1]}^\xi
\end{bmatrix}.
\end{align}
According to Theorem~\ref{thm:hankel_disc_time}, the sequences $\{\bar{u}^i_k,\bar{y}^i_k\}_{k=0}^{L-1}$ are trajectories of $G$.
Further,~\eqref{eq:prop_hankel_weave2} and~\eqref{eq:prop_hankel_weave3} imply that, at the transitions between the separate trajectories, they align over windows of length $n$, i.e.,
\begin{align}\label{eq:thm_stacked_proof2a}
\bar{u}^i_{[L-n,L-1]}&=\bar{u}^{i+1}_{[0,n-1]},\>\>i\in\mathbb{I}_{[1,\xi-1]},\\\label{eq:thm_stacked_proof2b}
\bar{y}^i_{[L-n,L-1]}&=\bar{y}^{i+1}_{[0,n-1]},\>\>i\in\mathbb{I}_{[1,\xi-1]}.
\end{align}
Denote by $\{\bar{x}^i_k\}_{k=0}^{L-1}$ the state trajectory corresponding to $\{\bar{u}^i_k,\bar{y}^i_k\}_{k=0}^{L-1}$ in some minimal realization of $G$.
The conditions~\eqref{eq:thm_stacked_proof2a} and~\eqref{eq:thm_stacked_proof2b} imply that, at the transitions between the separate trajectories, the internal states align, i.e., $\bar{x}^i_L=\bar{x}^{i+1}_n$, and thus, $\{\bar{u}_k,\bar{y}_k\}_{k=0}^{\tilde{L}-1}$ is a trajectory of $G$.

\textbf{Only If.}
Suppose $\{\bar{u}_k,\bar{y}_k\}_{k=0}^{\tilde{L}-1}$ is a trajectory of $G$.
Define $\{\bar{u}^i_k,\bar{y}^i_k\}_{k=0}^{L-1}$, $i\in\mathbb{I}_{[1,\xi]}$, according to~\eqref{eq:thm_stacked_proof1}-\eqref{eq:thm_stacked_proof2b} and note that any of these sequences is itself a trajectory of $G$.
Hence, it follows directly from Theorem~\ref{thm:hankel_disc_time} that there exist $\alpha^i\in\mathbb{R}^{N-L+1}$, $i\in\mathbb{I}_{[1,\xi]},$ such that~\eqref{eq:prop_hankel_weave1}-\eqref{eq:prop_hankel_weave3} hold.
\end{proof}

Proposition~\ref{prop:hankel_weave} weaves multiple trajectories $\{\bar{u}^i_k,\bar{y}^i_k\}_{k=0}^{L-1}$ together to form a single, longer sequence $\{\bar{u}_k,\bar{y}_k\}_{k=0}^{\tilde{L}-1}$.
To make this sequence a trajectory of $G$, it only needs to be ensured that the shorter trajectories align over at least $n$ steps at their intersections.
Note that the number of trajectories $\xi$ can be chosen arbitrarily large and thus, Proposition~\ref{prop:hankel_weave} can be used to construct trajectories of arbitrary length, using a single measured trajectory of \emph{finite} length.
Although we assume for notational simplicity that all trajectories contributing to the overall trajectory are of equal length, the same idea can be applied to weave trajectories of different lengths together.
Further, one can straightforwardly employ measurements from multiple experiments of possibly different time horizons.

\section{Trajectory-based representation of nonlinear systems} \label{sec:nonlinear_systems}

In this section, we extend Theorem~\ref{thm:hankel_disc_time} to certain classes of nonlinear systems.
In particular, we consider the special cases of Hammerstein and Wiener systems.
More generally, this allows us to extend Theorem~\ref{thm:hankel_disc_time} to all systems, which are linear in suitably chosen and known input-output coordinates.
During the last decades, there have been many contributions to identify Hammerstein and Wiener systems from data~\cite{Bai98,Westwick96}.
Our results can be seen as an alternative to the identification of such systems, using a single measured trajectory to represent them.

\subsection{Hammerstein systems}\label{sec:Hammerstein}

A Hammerstein system is a nonlinear system, composed of a static nonlinearity followed by an LTI system, i.e.,
\begin{align}\label{eq:Hammerstein_sys}
\begin{split}
x_{k+1}&=Ax_k+B\psi(u_k),\>\>x_0=\bar{x},\\
y_k&=Cx_k+D\psi(u_k),
\end{split}
\end{align}
with a nonlinear function $\psi:\mathbb{R}^m\to\mathbb{R}^{\tilde{m}}$.
In the following, we deal only with the case $\tilde{m}=1$ for notational simplicity, but the same ideas can be employed for $\tilde{m}>1$.
We assume that $\psi$ can be written as $\psi(u)=\sum_{i=1}^ra_i\psi_i(u)$, with $a_i$ not all zero, for $r$ known basis functions $\psi_i$.
Further, we define the auxiliary input trajectory $\{v_k\}_{k=0}^{N-1}$ with components
\begin{align}\label{eq:aux_input}
v_k=\begin{bmatrix}\psi_1(u_k)\\\vdots\\\psi_r(u_k)\end{bmatrix}.
\end{align}
The following result uses the fact that~\eqref{eq:Hammerstein_Wiener_sys} can also be viewed as a linear map from $v$ to $y$.

\begin{proposition}\label{prop:hankel_Hammerstein}
Suppose $\{u_k,y_k\}_{k=0}^{N-1}$ is a trajectory of a Hammerstein system~\eqref{eq:Hammerstein_sys}, where $v$ from~\eqref{eq:aux_input} is persistently exciting of order $L+n$.
Then, $\{\bar{u}_k,\bar{y}_k\}_{k=0}^{L-1}$ is a trajectory of~\eqref{eq:Hammerstein_sys} if and only if there exists $\alpha\in\mathbb{R}^{N-L+1}$ such that
\begin{align}\label{eq:prop_hankel_Hammerstein}
\begin{bmatrix}H_L(v)\\H_L(y)\end{bmatrix}\alpha
=\begin{bmatrix}\bar{v}\\\bar{y}\end{bmatrix},
\end{align}
where $\{\bar{v}_k\}_{k=0}^{L-1}$ is the sequence with components
\begin{align*}
\bar{v}_k=\begin{bmatrix}\psi_1(\bar{u}_k)\\\vdots\\\psi_r(\bar{u}_k)\end{bmatrix}.
\end{align*}
\end{proposition}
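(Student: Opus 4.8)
The plan is to reduce this statement to a direct application of Theorem~\ref{thm:hankel_disc_time} by working in the auxiliary coordinates $(v,y)$ rather than $(u,y)$. The key observation is that the map $v_k \mapsto y_k$ through~\eqref{eq:Hammerstein_sys} is exactly an LTI system: writing $\psi(u_k) = \sum_{i=1}^r a_i \psi_i(u_k) = a^\top v_k$ with $a=(a_1,\dots,a_r)^\top$, the dynamics become $x_{k+1} = A x_k + (B a^\top) v_k$, $y_k = C x_k + (D a^\top) v_k$, which is an LTI system $\tilde G$ with input $v_k \in \mathbb{R}^r$, output $y_k$, and the same state dimension $n$ (a minimal realization of $\tilde G$ may have order $\tilde n \le n$, but $n$ remains a valid upper bound, so persistence of excitation of order $L+n$ suffices). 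Thus $\{u_k,y_k\}_{k=0}^{N-1}$ being a trajectory of~\eqref{eq:Hammerstein_sys} means precisely that $\{v_k,y_k\}_{k=0}^{N-1}$ is a trajectory of $\tilde G$, and similarly for the candidate sequence.

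First I would make this equivalence explicit: show that $\{\bar u_k, \bar y_k\}_{k=0}^{L-1}$ is a trajectory of the Hammerstein system~\eqref{eq:Hammerstein_sys} if and only if $\{\bar v_k, \bar y_k\}_{k=0}^{L-1}$ is a trajectory of the LTI system $\tilde G$, where $\bar v_k$ is defined componentwise from $\bar u_k$ via the basis functions $\psi_i$ exactly as $v_k$ is built from $u_k$. This holds because the state sequence witnessing the Hammerstein trajectory definition is literally the same state sequence witnessing the $\tilde G$ trajectory definition — the nonlinearity has been absorbed into the (now linear) input signal. Second, I would invoke the hypothesis that $v$ is persistently exciting of order $L+n$, so Theorem~\ref{thm:hankel_disc_time} applies verbatim to $\tilde G$ with data $\{v_k,y_k\}_{k=0}^{N-1}$: there exists $\alpha \in \mathbb{R}^{N-L+1}$ with $\begin{bmatrix}H_L(v)\\H_L(y)\end{bmatrix}\alpha = \begin{bmatrix}\bar v\\\bar y\end{bmatrix}$ if and only if $\{\bar v_k,\bar y_k\}_{k=0}^{L-1}$ is a trajectory of $\tilde G$. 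Chaining this with the equivalence from the first step yields~\eqref{eq:prop_hankel_Hammerstein} and completes the proof.

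The one point that deserves care — and which I expect to be the main (minor) obstacle — is the bookkeeping on the persistence of excitation hypothesis and the order: Theorem~\ref{thm:hankel_disc_time} is stated for a minimal realization, while $\tilde G$ obtained by the substitution above need not be given minimally, and its McMillan degree $\tilde n$ may be strictly below $n$. This is harmless because persistence of excitation of order $L+n \ge L+\tilde n$ of $v$ is a \emph{stronger} condition than what Theorem~\ref{thm:hankel_disc_time} needs for $\tilde G$, and a trajectory of $\tilde G$ in the sense of Definition~\ref{def:trajectory_of} is insensitive to which minimal realization is used. One should also note that the hypothesis ``$a_i$ not all zero'' is what makes the input map $Ba^\top$ nontrivial and keeps the construction meaningful, though it is not strictly needed for the logical equivalence itself. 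Beyond these remarks the argument is a one-line reduction, and no genuine calculation is required.
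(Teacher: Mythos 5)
Your proposal is correct and follows essentially the same route as the paper: both pass to the lifted LTI system with input $v$ via $\tilde{B}=Ba^\top$, $\tilde{D}=Da^\top$, observe that $\{\bar{u}_k,\bar{y}_k\}$ is a trajectory of the Hammerstein system iff $\{\bar{v}_k,\bar{y}_k\}$ is a trajectory of the lifted system, and then apply Theorem~\ref{thm:hankel_disc_time}. The one small quibble is your aside that the hypothesis ``$a_i$ not all zero'' is not strictly needed for the equivalence: the paper invokes it precisely to guarantee controllability of the lifted system, which is what the ``only if'' direction of Theorem~\ref{thm:hankel_disc_time} (i.e., Willems' fundamental lemma) requires, so it should not be dismissed.
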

\begin{proof}
Define the LTI system
\begin{align}\label{eq:Hammerstein_sys_lifted}
\begin{split}
x_{k+1}&=Ax_k+\tilde{B} v_k,\\
y_k&=Cx_k+\tilde{D} v_k,
\end{split}
\end{align}
with input $v$ and output $y$, with $a^\top=\begin{bmatrix}a_1 & \dots & a_r\end{bmatrix}$, $\tilde{B}=Ba^\top,\tilde{D}=Da^\top$, and $A,B,C,D$ from~\eqref{eq:Hammerstein_sys}.
Clearly, a sequence $\{\bar{u}_k,\bar{y}_k\}_{k=0}^{L-1}$ is a trajectory of~\eqref{eq:Hammerstein_sys} if and only if $\{\bar{v}_k,\bar{y}_k\}_{k=0}^{L-1}$ is a trajectory of~\eqref{eq:Hammerstein_sys_lifted}.
Further, using that $v$ is persistently exciting and~\eqref{eq:Hammerstein_sys_lifted} is controllable since the $a_i$'s are not all zero, Theorem~\ref{thm:hankel_disc_time} implies that $\{\bar{v}_k,\bar{y}_k\}_{k=0}^{L-1}$ is a trajectory of~\eqref{eq:Hammerstein_sys_lifted} if and only if there exists $\alpha\in\mathbb{R}^{N-L+1}$ such that~\eqref{eq:prop_hankel_Hammerstein} holds, which was to be shown.
\end{proof}

For the application of Proposition~\ref{prop:hankel_Hammerstein}, the basis functions $\psi_i$ of $\psi$ have to be known.
In practice, it may be adequate to simply choose sufficiently many basis functions, thereby approximating the true ones.
Note however that the number of basis functions $r$ enters into the persistence of excitation assumption on the auxiliary input $v$.
To be more precise, for $v$ to be persistently exciting of order $L+n$, it is necessary that $N\geq(r+1)(L+n)-1$ and hence, Proposition~\ref{prop:hankel_Hammerstein} does not allow for arbitrarily many basis functions.
Nevertheless, we show in Section~\ref{sec:DD_sim} that, for the data-driven simulation problem, Proposition~\ref{prop:hankel_Hammerstein} leads to meaningful results even if infinitely many basis functions are chosen implicitly via a kernel function.

\subsection{Wiener systems}\label{sec:Wiener}

A Wiener system consists of an LTI system followed by a static nonlinearity, i.e., it is of the form
\begin{align}\label{eq:Wiener_sys}
\begin{split}
x_{k+1}&=Ax_k+Bu_k,\>\>x_0=\bar{x},\\
y_k&=\phi(Cx_k+Du_k),
\end{split}
\end{align}
with a nonlinear function $\phi:\mathbb{R}^{\tilde{p}}\to\mathbb{R}^p$.
Similar to Section~\ref{sec:Hammerstein}, we consider in the following only the case $\tilde{p}=1$.
To apply the same reasoning as for Hammerstein systems, we assume that $\phi$ is invertible and that its inverse admits a basis function decomposition as $\phi^{-1}(y)=\sum_{i=1}^q b_i\tilde{\phi}_i(y)$ with $q$ known basis functions $\tilde{\phi}_i$.
We define an auxiliary output trajectory $\{z_k\}_{k=0}^{N-1}$ with components
\begin{align}\label{eq:aux_output}
z_k=\begin{bmatrix}\tilde{\phi}_1(y_k)\\\vdots\\\tilde{\phi}_q(y_k)\end{bmatrix},
\end{align}
which will serve as the output of an equivalent LTI system.
The following result is the correspondence of Proposition~\ref{prop:hankel_Hammerstein} for the Wiener system case.

\begin{proposition}\label{prop:hankel_Wiener}
Suppose $\{u_k,y_k\}_{k=0}^{N-1}$ is a trajectory of a Wiener system~\eqref{eq:Wiener_sys}, where $u$ is persistently exciting of order $L+n$.
Then, $\{\bar{u}_k,\bar{y}_k\}_{k=0}^{L-1}$ is a trajectory of~\eqref{eq:Wiener_sys} if there exists $\alpha\in\mathbb{R}^{N-L+1}$ such that
\begin{align}\label{eq:prop_hankel_Wiener}
\begin{bmatrix}H_L(u)\\H_L(z)\end{bmatrix}\alpha
=\begin{bmatrix}\bar{u}_{[0,L-1]}\\\bar{z}_{[0,L-1]}\end{bmatrix},
\end{align}
where $\{\bar{z}_k\}_{k=0}^{L-1}$ is the sequence with components
\begin{align*}
\bar{z}_k=\begin{bmatrix}\tilde{\phi}_1(\bar{y}_k)\\\vdots\\\tilde{\phi}_q(\bar{y}_k)\end{bmatrix}.
\end{align*}
\end{proposition}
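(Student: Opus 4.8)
The plan is to mirror the argument used for Hammerstein systems in Proposition~\ref{prop:hankel_Hammerstein}, but now lifting on the output side rather than the input side. First I would define an auxiliary LTI system with input $u$ and output $z$ by noting that, along any trajectory of~\eqref{eq:Wiener_sys}, the auxiliary output satisfies $z_k = \phi^{-1}(y_k) = \phi^{-1}(\phi(Cx_k+Du_k)) = Cx_k+Du_k$, so that
\begin{align}\label{eq:Wiener_sys_lifted}
\begin{split}
x_{k+1}&=Ax_k+Bu_k,\\
z_k&=Cx_k+Du_k
\end{split}
\end{align}
holds with the same $(A,B,C,D)$ as in~\eqref{eq:Wiener_sys}. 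Hence, whenever $\{\bar u_k,\bar y_k\}_{k=0}^{L-1}$ is a trajectory of~\eqref{eq:Wiener_sys}, the pair $\{\bar u_k,\bar z_k\}_{k=0}^{L-1}$ is a trajectory of~\eqref{eq:Wiener_sys_lifted}, where $\bar z$ is defined via the basis functions $\tilde\phi_i$ as in the statement.

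The second step is to apply Theorem~\ref{thm:hankel_disc_time} to the lifted system~\eqref{eq:Wiener_sys_lifted}. Since $u$ is persistently exciting of order $L+n$ and $(A,B,C,D)$ is a minimal (hence controllable) realization of the underlying LTI block, Theorem~\ref{thm:hankel_disc_time} applies directly and yields the equivalence: $\{\bar u_k,\bar z_k\}_{k=0}^{L-1}$ is a trajectory of~\eqref{eq:Wiener_sys_lifted} if and only if there exists $\alpha\in\mathbb{R}^{N-L+1}$ with
\begin{align*}
\begin{bmatrix}H_L(u)\\H_L(z)\end{bmatrix}\alpha
=\begin{bmatrix}\bar u_{[0,L-1]}\\\bar z_{[0,L-1]}\end{bmatrix}.
\end{align*}
This is exactly~\eqref{eq:prop_hankel_Wiener}.

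The final step is to chain the two equivalences in the right direction. From a solution $\alpha$ of~\eqref{eq:prop_hankel_Wiener} we obtain, by the "if"-part just established, that $\{\bar u_k,\bar z_k\}_{k=0}^{L-1}$ is a trajectory of~\eqref{eq:Wiener_sys_lifted}, i.e.\ there are a state sequence and initial condition with $\bar z_k = C\bar x_k + D\bar u_k$; applying $\phi$ to both sides and using $\bar z_k = \phi^{-1}(\bar y_k)$ together with invertibility of $\phi$ gives $\bar y_k = \phi(C\bar x_k + D\bar u_k)$, so $\{\bar u_k,\bar y_k\}_{k=0}^{L-1}$ is a trajectory of~\eqref{eq:Wiener_sys}. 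This proves the "if" direction, which is all the proposition claims.

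The main subtlety — and the reason the statement is only an implication rather than an equivalence, unlike Propositions~\ref{prop:hankel_disc_time} and~\ref{prop:hankel_Hammerstein} — is the direction of the nonlinear substitution. Going from $\{\bar u_k,\bar y_k\}$ of~\eqref{eq:Wiener_sys} to $\{\bar u_k,\bar z_k\}$ of~\eqref{eq:Wiener_sys_lifted} uses $z_k = \phi^{-1}(y_k)$, which is always well defined. For the converse one would need that $\bar y_k := \phi(\bar z_k)$ reproduces $\bar z_k$ when passed back through $\phi^{-1}$, i.e.\ that the candidate $\bar z_k$ produced from~\eqref{eq:prop_hankel_Wiener} actually lies in the range of $\phi^{-1}$; since $\phi^{-1}=\sum_{i=1}^q b_i\tilde\phi_i$ is a fixed linear combination of the basis functions, an arbitrary $\alpha$-combination of the rows of $H_L(z)$ need not have this consistency property, so one cannot conclude that the resulting $\bar y$ is a Wiener trajectory. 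Thus I would not attempt to prove the "only if" direction, and the proof ends with the "if" argument above.
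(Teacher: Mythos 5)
Your overall route --- lift the output through the inverse nonlinearity and reduce everything to Theorem~\ref{thm:hankel_disc_time} --- is exactly what the paper has in mind when it declares the proof ``similar'' to that of Proposition~\ref{prop:hankel_Hammerstein}, and your closing discussion of why only the ``if'' direction survives matches the paper's own remark. However, one step does not hold as written. You identify the auxiliary output $z_k$ with the scalar $\phi^{-1}(y_k)=Cx_k+Du_k$ and then assert that Theorem~\ref{thm:hankel_disc_time} applies to the system with input $u$ and output $z$ and ``yields the equivalence'' between \eqref{eq:prop_hankel_Wiener} and $(\bar u,\bar z)$ being a trajectory of that system. But $z_k$ is the $q$-dimensional vector $\bigl(\tilde\phi_1(y_k),\dots,\tilde\phi_q(y_k)\bigr)$, and only the particular combination $b^\top z_k=\sum_{i=1}^q b_i\tilde\phi_i(y_k)=\phi^{-1}(y_k)$ equals $Cx_k+Du_k$; the individual components $\tilde\phi_i(y_k)$ are in general not linear functions of $(x_k,u_k)$, so the pair $(u,z)$ is not a trajectory of any LTI system and Theorem~\ref{thm:hankel_disc_time} cannot be invoked for it. This is precisely the asymmetry with the Hammerstein case, where $\tilde B=Ba^\top$ and $\tilde D=Da^\top$ make the lifted system with the full vector input $v$ genuinely LTI; no analogous $\tilde C$, $\tilde D$ exist on the output side.

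The gap is easily repaired, and the repair is presumably what ``similar arguments'' is meant to cover: given $\alpha$ satisfying \eqref{eq:prop_hankel_Wiener}, left-multiply the output block by $I_L\otimes b^\top$ to obtain $H_L(w)\alpha=\bar w_{[0,L-1]}$ for the scalar signals $w_k\coloneqq b^\top z_k=\phi^{-1}(y_k)=Cx_k+Du_k$ and $\bar w_k\coloneqq b^\top\bar z_k=\phi^{-1}(\bar y_k)$. Since $(u,w)$ is a trajectory of the LTI system $(A,B,C,D)$, the ``if'' part of Theorem~\ref{thm:hankel_disc_time} (which requires no persistence of excitation) yields a state sequence with $\bar w_k=C\bar x_k+D\bar u_k$, and applying $\phi$ gives $\bar y_k=\phi(C\bar x_k+D\bar u_k)$, exactly as you conclude. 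With this projection step inserted, your argument is correct and coincides with the paper's intended one; note also that it shows the persistence of excitation hypothesis is not actually used in this direction.
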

\begin{proof}
This can be shown using similar arguments as in the proof of Proposition~\ref{prop:hankel_Hammerstein}. Therefore, the proof is omitted.
\end{proof}

Contrary to the Hammerstein case, the above result does not pose any limit on the maximal number of basis functions we may choose.
However, they represent the \emph{inverse} of $\phi$ and are thus more difficult to select in applications.
Further, the ``only if''-direction does in general not hold for Wiener systems since the map $u\mapsto z$ is not necessarily linear.

\begin{remark}
From the perspective of Koopman operator theory, there has recently been a renewed interest in viewing nonlinear systems as linear systems in lifted state coordinates~\cite{Budisic12}.
In a similar fashion, Propositions~\ref{prop:hankel_Hammerstein} and~\ref{prop:hankel_Wiener} can be combined directly to provide trajectory-based representations of nonlinear systems, which are linear in suitable higher-dimensional input-output coordinates.
Even if such coordinates do not exist or are not known, one may in practice simply choose sufficiently many basis functions to approximate the unknown nonlinear system.
In Section~\ref{sec:DD_sim}, we illustrate the effectiveness of this approach for the data-driven simulation problem.
Note that considering systems which are linear in suitable input-output coordinates is more restrictive than dealing with systems which are linear in certain lifted state coordinates.
On the other hand, in contrast to many methods related to Koopman operator theory, the present setting does not require state measurements, but only input-output data.
\end{remark}

\section{Data-driven simulation}\label{sec:DD_sim}

The data-driven simulation problem is concerned with the computation of an unknown system's output resulting from the application of a given input, using no model but only a previously measured input-output trajectory.
Its solution is described in the behavioral context in~\cite{Markovsky08}.
Loosely speaking, the idea is to fix $\bar{u}$ in~\eqref{eq:thm_hankel} to first solve $\bar{u}=H_L(u)\alpha$ for $\alpha$, in order to then compute the new predicted output $\bar{y}=H_L(y)\alpha$.
To fix a unique such output, initial conditions have to be specified~\cite{Markovsky08}.
Since a state-space model is not available, we consider an \emph{initial input-output trajectory} over a length of at least $n$, since this induces a unique initial state in \emph{some} minimal realization.
The following is the main result of~\cite{Markovsky08}.

\begin{proposition}\label{prop:DD_sim}
Suppose $\{u_k,y_k\}_{k=0}^{N-1}$ is a trajectory of a discrete-time LTI system $G$, where $u$ is persistently exciting of order $L+n$.
Let $\{\bar{u}_k,\bar{y}_k\}_{k=0}^{L-1}$ be an arbitrary trajectory of $G$.
If $\nu\geq n$, then there exists an $\alpha\in\mathbb{R}^{N-L+1}$ to
\begin{align}\label{eq:prop_DD_sim}
\begin{bmatrix}H_L(u)\\H_\nu\left(y_{[0,N-L+\nu-1]}\right)\end{bmatrix}
\alpha=\begin{bmatrix}\bar{u}\\\bar{y}_{[0,\nu-1]}\end{bmatrix}.
\end{align}
Further, it holds that $\bar{y}=H_L(y)\alpha$.
\end{proposition}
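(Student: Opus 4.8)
The plan is to split the statement into two claims treated in order: (i) some $\alpha\in\mathbb{R}^{N-L+1}$ solves~\eqref{eq:prop_DD_sim}, and (ii) \emph{every} $\alpha$ solving~\eqref{eq:prop_DD_sim} also satisfies $\bar y=H_L(y)\alpha$. A preliminary point is that the two stacked Hankel matrices in~\eqref{eq:prop_DD_sim} are column-compatible: $H_L(u)$ has $N-L+1$ columns, and $H_\nu(y_{[0,N-L+\nu-1]})$, being built from a sequence of length $N-L+\nu$, also has $N-L+1$ columns, consistently with $\alpha\in\mathbb{R}^{N-L+1}$. This tacitly uses $\nu\le L$, which is in any case needed for the right-hand side $\bar y_{[0,\nu-1]}$ to be well defined.

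For (i): since $\{\bar u_k,\bar y_k\}_{k=0}^{L-1}$ is a trajectory of $G$ and $u$ is persistently exciting of order $L+n$, Theorem~\ref{thm:hankel_disc_time} yields an $\alpha$ with $H_L(u)\alpha=\bar u$ and $H_L(y)\alpha=\bar y$. I would then note that the $i$-th block component of $H_\nu(y_{[0,N-L+\nu-1]})\alpha$ equals $\sum_j\alpha_j y_{i+j}$, which for $i\in\mathbb{I}_{[0,\nu-1]}$ is precisely the $i$-th block component of $H_L(y)\alpha=\bar y$, that is, $\bar y_i$; hence $H_\nu(y_{[0,N-L+\nu-1]})\alpha=\bar y_{[0,\nu-1]}$, so this $\alpha$ solves~\eqref{eq:prop_DD_sim}. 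Persistence of excitation enters only in this step.

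For (ii), which carries the actual content: take any $\alpha$ solving~\eqref{eq:prop_DD_sim} and set $\hat u:=H_L(u)\alpha=\bar u$ and $\hat y:=H_L(y)\alpha$. By the ``if''-direction of Theorem~\ref{thm:hankel_disc_time}, which only uses linearity of $G$, the sequence $\{\hat u_k,\hat y_k\}_{k=0}^{L-1}$ is a trajectory of $G$. Fix a minimal realization $(A,B,C,D)$ of $G$. From~\eqref{eq:prop_DD_sim} we have $\hat u_{[0,\nu-1]}=\bar u_{[0,\nu-1]}$ and $\hat y_{[0,\nu-1]}=\bar y_{[0,\nu-1]}$. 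Since $\nu\ge n$, a fixed input-output window of length $\nu$ determines a unique internal state in this realization, so the equal windows $\{\bar u_k,\bar y_k\}_{k=0}^{\nu-1}$ and $\{\hat u_k,\hat y_k\}_{k=0}^{\nu-1}$ induce the same initial state $\bar x_0=\hat x_0$. Both trajectories then evolve by $x_{k+1}=Ax_k+Bu_k$ from this common initial state under the same input $\bar u=\hat u$ on $\mathbb{I}_{[0,L-1]}$, so their state and hence output sequences agree, which gives $\bar y=\hat y=H_L(y)\alpha$.

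I expect step (ii) to be the main point: one has to argue that matching only the first $\nu\ge n$ input-output samples (together with the full input) forces the entire output to match. This is exactly where the hypothesis $\nu\ge n$ is used, via observability of the minimal realization — an initial input-output window of length at least $n$ plays the role of an initial state. The remaining difficulty is purely bookkeeping: keeping the Hankel-matrix indices aligned and remembering the implicit range $n\le\nu\le L$.
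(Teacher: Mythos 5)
Your proof is correct. Note, however, that the paper itself does not prove this proposition directly: its ``proof'' is a one-line deferral to the corresponding behavioral-framework result, \cite[Proposition~1]{Markovsky08}. Your argument is therefore a genuinely self-contained alternative, and a natural one given the machinery the paper has already set up: the existence part (i) follows from the ``only if''-direction of Theorem~\ref{thm:hankel_disc_time} together with the observation that $H_\nu\left(y_{[0,N-L+\nu-1]}\right)$ is exactly the first $\nu$ block rows of $H_L(y)$, and the uniqueness-of-output part (ii) combines the ``if''-direction of Theorem~\ref{thm:hankel_disc_time} (which needs only linearity) with the fact, stated explicitly in the paper's Setting section, that an input-output window of length at least $n$ induces a unique state in a given minimal realization. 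You correctly identify (ii) as the step where $\nu\geq n$ does the work, and you correctly flag the implicit requirement $\nu\leq L$ for the dimensions to match. What your version buys over the paper's is transparency: it makes visible exactly which hypotheses feed which conclusion (persistence of excitation only for existence, observability only for uniqueness of the predicted output), whereas the citation hides this; the cost is a modest amount of index bookkeeping that the citation avoids. The underlying mechanism is the same as in the cited behavioral result, so this is a reformulation in the state-space language rather than a conceptually new proof.
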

\begin{proof}
This follows directly from the corresponding result in the behavioral framework~\cite[Proposition 1]{Markovsky08}.
\end{proof}

The main idea of Proposition~\ref{prop:DD_sim} is that the input $\{\bar{u}_k\}_{k=\nu}^{L-1}$ together with the initial trajectory $\{\bar{u}_k,\bar{y}_k\}_{k=0}^{\nu-1}$ fixes a vector $\alpha$ which can be used to uniquely predict the remaining elements of $\bar{y}$.
The condition $\nu\geq n$ means that $\nu$ is an upper bound on the order $n$ of $G$ and implies that $\{\bar{u}_k,\bar{y}_k\}_{k=0}^{\nu-1}$ specifies a unique initial condition for the internal state.

\begin{algorithm}
\begin{Algorithm}\label{alg:DD_sim}
\normalfont{\textbf{Data-driven simulation}}\\
\textbf{Given:} Data $\{u_k,y_k\}_{k=0}^{N-1}$, a new input $\{\bar{u}_k\}_{k=\nu}^{L-1}$, initial conditions $\{\bar{u}_k,\bar{y}_k\}_{k=0}^{\nu-1}$.
\begin{enumerate}
\item Solve~\eqref{eq:prop_DD_sim} for $\alpha$.
\item Compute the remaining elements of $\bar{y}$ as $\bar{y}=H_L(y)\alpha$.
\end{enumerate}
\end{Algorithm}
\end{algorithm}

The practical application of Proposition~\ref{prop:DD_sim} is illustrated in Algorithm~\ref{alg:DD_sim}.
Although the classical simulation problem is commonly approached using a model, it can be solved in the proposed trajectory-based framework using a single measured input-output trajectory.
Several extensions of Proposition~\ref{prop:DD_sim} have been suggested to account for noise~\cite{Markovsky06}, to simulate systems in closed loop~\cite{Markovsky10}, and to find feedforward controllers~\cite{Markovsky08}, but nonlinear systems have not been addressed in the literature.
Due to noise or numerical inaccuracies, the system of equations~\eqref{eq:prop_DD_sim} can usually not be solved exactly.
Instead, $\alpha$ can be computed via a simple least-squares optimization problem.
Denote
\begin{align*}
H_{L,\nu}(u,y)\coloneqq\begin{bmatrix}H_L(u)\\H_\nu\left(y_{[0,N-L+\nu-1]}\right)\end{bmatrix},\>\>\bar{w}\coloneqq\begin{bmatrix}\bar{u}\\\bar{y}_{[0,\nu-1]}\end{bmatrix}.
\end{align*}
In practice, the system of equations~\eqref{eq:prop_DD_sim} can be replaced by
\begin{align}\label{eq:DD_sim_lsq}
\underset{\alpha\in\mathbb{R}^{N-L+1}}{\text{minimize}}\left\lVert H_{L,\nu}(u,y)\alpha-\bar{w}\right\rVert_2^2.
\end{align}
In case that $H_{L,\nu}(u,y)$ contains noisy data, a solution $\alpha$ with small norm reduces the influence of the noise on the simulation accuracy.
Therefore, it is desirable to penalize the norm of $\alpha$, leading to the regularized least-squares problem
\begin{align}\label{eq:DD_sim_lsq_reg}
\underset{\alpha\in\mathbb{R}^{N-L+1}}{\text{minimize}}
\left\lVert H_{L,\nu}(u,y)\alpha-\bar{w}\right\rVert_2^2
+\lambda\lVert\alpha\rVert_2^2,
\end{align}
where $\lambda>0$ is a regularization parameter.
As an alternative, one may consider general quadratic regularization terms $\lVert \alpha\rVert_P^2$ with $P\succ0$ or an $\ell_1$-regularization.
In the following, we show how kernel methods can be employed to derive an appealing reformulation of Problem~\eqref{eq:DD_sim_lsq_reg} for the class of nonlinear systems considered in Section~\ref{sec:nonlinear_systems}.

Let us consider a Hammerstein-Wiener system of the form
\begin{align}\label{eq:Hammerstein_Wiener_sys}
\begin{split}
x_{k+1}&=Ax_k+B\psi(u_k),\>\>x_0=\bar{x},\\
y_k&=\Phi(Cx_k+D\psi(u_k)).
\end{split}
\end{align}
According to Propositions~\ref{prop:hankel_Hammerstein} and~\ref{prop:hankel_Wiener}, the trajectory space of~\eqref{eq:Hammerstein_Wiener_sys} is spanned by Hankel matrices containing data in the lifted coordinates $v$ and $z$, as defined in~\eqref{eq:aux_input} and~\eqref{eq:aux_output}.
Thus, for the system class~\eqref{eq:Hammerstein_Wiener_sys}, the optimization problem~\eqref{eq:DD_sim_lsq_reg} takes the form
\begin{align}\label{eq:DD_sim_lsq_reg_lifted}
\underset{\alpha\in\mathbb{R}^{N-L+1}}{\text{minimize}}
\left\lVert H_{L,\nu}(v,z)\alpha-\begin{bmatrix}\bar{v}\\\bar{z}_{[0,\nu-1]}
\end{bmatrix}\right\rVert_2^2
+\lambda\lVert\alpha\rVert_2^2.
\end{align}
In the following, we write $\psi^r(u_k)$ and $\tilde{\Phi}^q(y_k)$ for the stacked inputs $v_k$ and outputs $z_k$ at time $k$, respectively.
Note that Problem~\eqref{eq:DD_sim_lsq_reg_lifted} does not depend explicitly on these vectors, but only on their scalar product.
This allows for an application of the kernel trick, which can be used to compute such inner products implicitly~\cite{Burges98}.
Define kernel functions as
\begin{align*}
\mathcal{K}_\psi(u_k^1,u_k^2)&=\psi^r(u_k^1)^\top\psi^r(u_k^2),\\
\mathcal{K}_\Phi(y_k^1,y_k^2)&=\tilde{\Phi}^q(y_k^1)^\top\tilde{\Phi}^q(y_k^2),
\end{align*}
and note that~\eqref{eq:DD_sim_lsq_reg_lifted} depends only on those kernels, but not explicitly on the basis functions $\psi^r,\tilde{\Phi}^q$.
Thus, for the implementation, it suffices to select a kernel, which then implicitly implies a set of basis functions for the nonlinearities $\psi$ and $\tilde{\Phi}$.
For instance, if $m=1$, a squared exponential kernel of the form 
\begin{align}\label{eq:kernel}
\begin{split}
\mathcal{K}_\psi(u_1,u_2)=e^{-\frac{(u_1-u_2)^2}{2\sigma^2}},
\end{split}
\end{align}
for some hyperparameter $\sigma>0$, corresponds to an infinite set of basis functions.
If the set of basis functions corresponding to the chosen kernel contains all basis functions of $\psi$ and $\tilde{\Phi}$, then the data-driven simulation problem can be solved exactly for the considered class of nonlinear systems.
In fact, as we will see in the following example, the data-driven simulation problem can be solved accurately, even if the data is affected by noise and the true basis functions are only represented approximately by the chosen kernel.

\begin{example}\label{ex:example}
We consider a Hammerstein system~\eqref{eq:Hammerstein_sys} with nonlinearity $\psi(u)=\sin(u)$ and the system matrices
\begin{align*}
A&=\begin{bmatrix}0.4&-0.3&0&0.1\\-0.3&0&0.8&-0.1\\
0.1&-0.7&-0.4&0\\0.2&-0.5&0.5&0.4\end{bmatrix},\>\>
B=\begin{bmatrix}0\\-1\\1.4\\0\end{bmatrix},\\
C&=\begin{bmatrix}-0.7&0&-2&0.4\end{bmatrix},\>\>D=0.2.
\end{align*}
We assume that the system order $n=4$ is known, i.e., $\nu=4$.
From an open-loop simulation, a trajectory $\{u_k,y_k\}_{k=0}^{N-1}$ of length $N=1000$ is collected, where the output is subject to multiplicative measurement noise with signal-to-noise ratio $5\%$.
Problem~\eqref{eq:DD_sim_lsq_reg_lifted} with a squared exponential kernel with $\sigma=1$ is used to compute the output $\bar{y}$ resulting from a uniformly distributed random input $\bar{u}$ in the interval $[-0.3,0.3]$ of length $L=50$ with zero initial conditions.
The regularization parameter is chosen as $\lambda=10$.
Figure~\ref{fig:DD_sim_kernel_sin} shows the resulting output estimate as well as the true output for comparison.
It can be seen that the estimate is good, considering the noise level.
If the regularization term is omitted, i.e., $\lambda=0$, or a fixed number of polynomial basis functions is chosen, then the estimation accuracy deteriorates significantly, even for smaller noise levels.

\begin{figure}[h!]
\begin{center}
\includegraphics[width=0.49\textwidth]{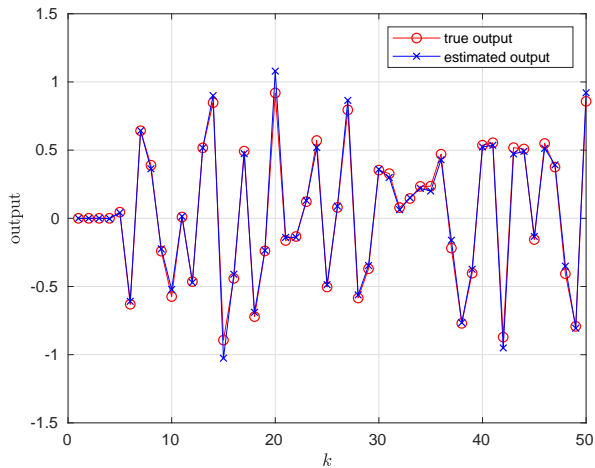}
\end{center}
\caption{True output and estimated output, computed via the proposed kernel-based data-driven simulation approach, corresponding to Example~\ref{ex:example}.}
\label{fig:DD_sim_kernel_sin}
\end{figure}
\end{example}

\section{Conclusion}\label{sec:conclusion}
This paper described a purely data-driven framework for system analysis and control.
All trajectories of an unknown system can be constructed from a single measured trajectory and thus, this trajectory captures all the required information needed for analysis and controller design, without explicit identification of a model.
After describing this result in the classical control framework, we extended it to certain classes of nonlinear systems and we applied this extension to the data-driven simulation problem via kernel methods.
Future research should further explore applications of the nonlinear extension presented in Section~\ref{sec:nonlinear_systems} to data-driven system analysis and control problems, as well as connections to more elaborate results from the literature on kernel methods.
 
\bibliographystyle{IEEEtran}  
\bibliography{Literature}

\begin{thebibliography}{10}
\providecommand{\url}[1]{#1}
\csname url@rmstyle\endcsname
\providecommand{\newblock}{\relax}
\providecommand{\bibinfo}[2]{#2}
\providecommand\BIBentrySTDinterwordspacing{\spaceskip=0pt\relax}
\providecommand\BIBentryALTinterwordstretchfactor{4}
\providecommand\BIBentryALTinterwordspacing{\spaceskip=\fontdimen2\font plus
\BIBentryALTinterwordstretchfactor\fontdimen3\font minus
  \fontdimen4\font\relax}
\providecommand\BIBforeignlanguage[2]{{%
\expandafter\ifx\csname l@#1\endcsname\relax
\typeout{** WARNING: IEEEtran.bst: No hyphenation pattern has been}%
\typeout{** loaded for the language `#1'. Using the pattern for}%
\typeout{** the default language instead.}%
\else
\language=\csname l@#1\endcsname
\fi
#2}}

\bibitem{Ljung87}
L.~Ljung, \emph{System Identification: Theory for the User}.\hskip 1em plus
  0.5em minus 0.4em\relax Prentice-Hall, Englewood Cliffs, NJ, 1987.

\bibitem{Sutton98}
R.~Sutton and A.~Barto, \emph{Reinforcement Learning: An Introduction}.\hskip
  1em plus 0.5em minus 0.4em\relax Cambridge, MA, MIT Press, 1998.

\bibitem{Recht18}
B.~Recht, ``A tour of reinforcement learning: The view from continuous
  control,'' \emph{Annual Review of Control, Robotics, and Autonomous Systems},
  2018.

\bibitem{Campi02}
M.~C. Campi, A.~Lecchini, and S.~M. Savaresi, ``Virtual reference feedback
  tuning: a direct method for the design of feedback controllers,''
  \emph{Automatica}, vol.~38, no.~8, pp. 742--753, 2002.

\bibitem{Hjalmarsson98}
H.~Hjalmarsson, M.~Gevers, S.~Gunnarsson, and O.~Lequin, ``Iterative feedback
  tuning: theory and applications,'' \emph{IEEE Control Systems Magazine},
  vol.~18, no.~4, pp. 26--41, 1998.

\bibitem{Dean19}
S.~Dean, H.~Mania, N.~Matni, B.~Recht, and S.~Tu, ``On the sample complexity of
  the linear quadratic regulator,'' \emph{Foundations of Computational
  Mathematics}, 2019, https://doi.org/10.1007/s10208-019-09426-y.

\bibitem{Umenberger18}
J.~Umenberger and T.~B. Sch\"{o}n, ``Learning convex bounds for linear
  quadratic control policy synthesis,'' in \emph{Advances in Neural Information
  Processing Systems 31}, S.~Bengio, H.~Wallach, H.~Larochelle, K.~Grauman,
  N.~Cesa-Bianchi, and R.~Garnett, Eds.\hskip 1em plus 0.5em minus 0.4em\relax
  Curran Associates, Inc., 2018, pp. 9584--9595.

\bibitem{Willems05}
J.~C. Willems, P.~Rapisarda, I.~Markovsky, and B.~{De Moor}, ``A note on
  persistency of excitation,'' \emph{Systems \& Control Letters}, vol.~54, pp.
  325--329, 2005.

\bibitem{Polderman98}
J.~W. Polderman and J.~C. Willems, \emph{Introduction to Mathematical Systems
  Theory}.\hskip 1em plus 0.5em minus 0.4em\relax Springer, New York, 1998.

\bibitem{Willems91}
J.~C. Willems, ``Paradigms and puzzles in the theory of dynamical systems,''
  \emph{IEEE Transactions on Automatic Control}, vol.~36, no.~3, pp. 259--294,
  1991.

\bibitem{Yang15}
H.~Yang and S.~Li, ``A data-driven predictive controller design based on
  reduced hankel matrix,'' in \emph{Proceedings of the 10th Asian Control
  Conference}, 2015, pp. 1--7.

\bibitem{Coulson19}
J.~Coulson, J.~Lygeros, and F.~D{\"o}rfler, ``Data-enabled predictive control:
  in the shallows of the {DeePC},'' in \emph{Proceedings of the 18th European
  Control Conference}, 2019, pp. 307--312.

\bibitem{Coulson19b}
------, ``Regularized and distributionally robust data-enabled predictive
  control,'' \emph{arXiv:1903.06804}, 2019.

\bibitem{Huang19}
L.~Huang, J.~Coulson, J.~Lygeros, and F.~D{\"o}rfler, ``Data-enabled predictive
  control for grid-connected power converters,'' \emph{arXiv:1903.07339}, 2019.

\bibitem{Berberich19b}
J.~Berberich, J.~K{\"o}hler, M.~A. M{\"u}ller, and F.~Allg{\"o}wer,
  ``Data-driven model predictive control with stability and robustness
  guarantees,'' \emph{IEEE Transactions on Automatic Control}, 2019, submitted,
  preprint on arXiv:1906.04679.

\bibitem{Persis19}
C.~{De Persis} and P.~Tesi, ``Formulas for data-driven control: Stabilization,
  optimality and robustness,'' \emph{arXiv:1903.06842}, 2019.

\bibitem{Berberich19c}
J.~Berberich, A.~Romer, C.~W. Scherer, and F.~Allg{\"o}wer, ``Robust
  data-driven state-feedback design,'' in \emph{Proc. American Control
  Conference}, 2020, accepted, preprint on arXiv:1909.04314.

\bibitem{Waarde19}
H.~J. {van Waarde}, J.~Eising, H.~L. Trentelman, and M.~K. Camlibel, ``Data
  informativity: a new perspective on data-driven system analysis and
  control,'' \emph{arXiv:1908.00468}, 2019.

\bibitem{Romer19}
A.~Romer, J.~Berberich, J.~K{\"o}hler, and F.~Allg{\"o}wer, ``One-shot
  verification of dissipativity properties from input-output data,'' \emph{IEEE
  Control Systems Letters}, vol.~3, no.~3, pp. 709--714, 2019.

\bibitem{Markovsky05}
I.~Markovsky, J.~C. Willems, P.~Rapisarda, and B.~L. M.~D. Moor, ``Algorithms
  for deterministic balanced subspace identification,'' \emph{Automatica},
  vol.~41, no.~5, pp. 755--766, 2005.

\bibitem{Bai98}
E.~W. Bai, ``An optimal two-stage identification algorithm for
  {Hammerstein}-{Wiener} nonlinear systems,'' \emph{Automatica}, vol.~34,
  no.~3, pp. 333--338, 1998.

\bibitem{Westwick96}
D.~Westwick and M.~Verhaegen, ``Identifying {MIMO} {Wiener} systems using
  subspace model identification methods,'' \emph{Signal Processing}, vol.~52,
  no.~2, pp. 235--258, 1996.

\bibitem{Budisic12}
M.~Budi{\v{s}}i{\'c}, R.~Mohr, and I.~Mezi{\'c}, ``Applied koopmanism,''
  \emph{Chaos: An Interdisciplinary Journal of Nonlinear Science}, vol.~22,
  no.~4, p. 047510, 2012.

\bibitem{Markovsky08}
I.~Markovsky and P.~Rapisarda, ``Data-driven simulation and control,''
  \emph{International Journal of Control}, vol.~81, no.~12, pp. 1946--1959,
  2008.

\bibitem{Markovsky06}
I.~Markovsky, R.~J. Vaccaro, and S.~{Van Huffel}, ``System identification by
  optimal subspace estimation,'' {Technical Report} 06-162. Dept. EE,
  K.U.Leuven, 2006.

\bibitem{Markovsky10}
I.~Markovsky, ``Closed-loop data-driven simulation,'' \emph{International
  Journal of Control}, vol.~83, no.~10, pp. 2134--2139, 2010.

\bibitem{Burges98}
C.~J.~C. Burges, ``A tutorial on support vector machines for pattern
  recognition,'' \emph{Data Mining and Knowledge Discovery}, vol.~2, no.~2, pp.
  121--167, 1998.

\end{thebibliography}

\end{document}